\documentclass[sigconf,authorversion]{acmart} 

\usepackage{amssymb}
\usepackage{amsthm}
\usepackage{array}
\usepackage{algorithm}
\usepackage{algorithmic}
\usepackage{balance}  

\AtBeginDocument{%
  \providecommand\BibTeX{{%
    \normalfont B\kern-0.5em{\scshape i\kern-0.25em b}\kern-0.8em\TeX}}}

\newcolumntype{P}[1]{>{\centering\arraybackslash}p{#1}}

\newtheorem{theorem}{Theorem}[section]

\copyrightyear{2026}
\acmYear{2026}
\setcopyright{cc}
\setcctype{by}
\acmConference[KDD '26]{Proceedings of the 32nd ACM SIGKDD Conference on Knowledge Discovery and Data Mining V.2}{August 09--13, 2026}{Jeju Island, Republic of Korea}
\acmBooktitle{Proceedings of the 32nd ACM SIGKDD Conference on Knowledge Discovery and Data Mining V.2 (KDD '26), August 09--13, 2026, Jeju Island, Republic of Korea}
\acmDOI{10.1145/3770855.3828560}
\acmISBN{979-8-4007-2259-2/2026/08}

\begin{document}

\title{Scalable Network-Aware Experiment Design for Two-Sided Marketplaces}

\author{Yi Su}
\authornote{Both authors contributed equally to this research.}
\affiliation{%
  \institution{LinkedIn Corporation}
  \streetaddress{800 E Middlefield Rd}
  \city{Mountain View}
  \state{CA}
  \country{USA}}
\email{ysu1@linkedin.com}
\orcid{0000-0002-4778-1161}

\author{Zhen Yan}
\authornotemark[1]
\affiliation{%
  \institution{LinkedIn Corporation}
  \streetaddress{800 E Middlefield Rd}
  \city{Mountain View}
  \state{CA}
  \country{USA}}
\email{zhyan@linkedin.com}
\orcid{0009-0000-7386-537X}

\renewcommand{\shortauthors}{Yi Su and Zhen Yan}

\begin{abstract}
Measuring causal effects in networked two-sided marketplaces is challenging due to treatment interference between market participants on different sides. When treatment is applied to one side (e.g., job seekers), their interactions with the other side (e.g., job posters) introduce spillover effects that violate the Stable Unit Treatment Value Assumption (SUTVA) and bias causal estimates. While cluster-based randomization mitigates this problem, prior approaches struggle with a fundamental trade-off: reducing spillover requires isolated clusters that will reduce the number of qualifying clusters, which decreases statistical power. This paper introduces \textit{EgoCluster V3}, an iterative clustering algorithm that reduces spillover by 3$\times$ compared to prior versions while preserving node coverage and doubling test power. We further introduce \textit{MultiEgoCluster}, which extends V3 through a two-stage procedure that first groups highly connected egos into multi-ego clusters before applying the iterative clustering algorithm. This achieves an additional $\sim$56\% spillover reduction and $\sim$38\% increase in sample size. Both methods are deployed in production at LinkedIn and have systematically enabled high-impact two-sided marketplace experiments. Since residual bias cannot be fully eliminated through clustering alone, we derive a theoretical bias correction method for average treatment effect (ATE) estimation based on graph structure and propose an approach to generalize results to the general population.
\end{abstract}

\begin{CCSXML}
<ccs2012>
   <concept>
       <concept_id>10002944.10011123.10011131</concept_id>
       <concept_desc>General and reference~Experimentation</concept_desc>
       <concept_significance>500</concept_significance>
       </concept>
   <concept>
       <concept_id>10002944.10011123.10010916</concept_id>
       <concept_desc>General and reference~Measurement</concept_desc>
       <concept_significance>500</concept_significance>
       </concept>
   <concept>
       <concept_id>10002944.10011123.10010912</concept_id>
       <concept_desc>General and reference~Empirical studies</concept_desc>
       <concept_significance>300</concept_significance>
       </concept>
</ccs2012>
\end{CCSXML}

\ccsdesc[500]{General and reference~Experimentation}
\ccsdesc[500]{General and reference~Measurement}
\ccsdesc[300]{General and reference~Empirical studies}

\keywords{network experimentation, cluster randomization, spillover effects,
  two-sided marketplaces, causal inference}


\maketitle

\section{Introduction}
\label{sec:intro}

Randomized controlled experiments are the gold standard for causal inference in industry and research. However, their applicability is inherently limited in networked environments where treatment effects spill over between connected units. This violation of SUTVA (Stable Unit Treatment Value Assumption) is particularly acute in two-sided marketplaces such as LinkedIn's feed (viewer vs creator) and job marketplaces (poster vs seeker), where actions by one participant directly affect outcomes for connected participants. For example, a job seeker viewing a job poster's posting would affect the job poster's engagement.

Consider measuring a relevance model's impact on creators in a typical feed marketplace: when we test a new feed ranking algorithm, viewers are randomized into treatment or control group. However, creators' behavior (our measurement target) depends on the aggregated feedback from both treatment and control viewers. If a treated viewer sends more positive feedback than a control viewer, this will move the receiving creator's metrics even without any direct intervention on the creators themselves. This spillover invalidates traditional ATE estimates and makes it impossible to measure the creator-side impact with sufficient statistical power.

Cluster-based randomization addresses this by grouping connected units into clusters and randomizing entire clusters rather than individual units. The intuition is straightforward: if we keep viewers (we call them ``alters'' in the EgoCluster context) who interact with the same creator (ego) together in the same treatment group, we isolate the creator from mixed treatment exposure. EgoCluster, first introduced in 2019 to measure network effects at LinkedIn \cite{Guillaume2019}, pioneered this approach. EgoCluster V2, released in 2024, improved scalability and increased sample size by 5 times through a one-degree label propagation algorithm that efficiently assigns alters to egos based on their strongest connections \cite{wesu2024improving}.

The original EgoCluster is a greedy algorithm that controls spillover well, but it suffered from scalability issue and smaller sample sizes hence smaller power, which EgoCluster V2 aims to address. However, V2 faced a critical limitation: despite generating many more clusters than V1, spillover remained high ($\approx$20\% loss rate at cluster creation). Naive approaches to control spillover---such as discarding all high-loss clusters---remove too many users and provide little spillover benefit. Another challenge is that V2 lacks a method to generalize the observed impact from an EgoCluster experiment to the general population. This paper addresses these limitations through four core innovations:

\paragraph{1. EgoCluster V3.} An iterative clustering algorithm that gradually removes only the ``worst'' egos and re-assigns their alters to remaining egos. This iterative approach reduces spillover from 30\% to <10\% while maintaining high node coverage, improving statistical power by 2$\times$.

\paragraph{2. MultiEgoCluster.} A two-stage procedure that first groups similar egos as a multi-ego using label propagation before conducting iterative clustering as in V3. This stage prevents highly connected egos from being split across treatment variants, and it leads to an additional $\sim$56\% spillover reduction and increased sample sizes ($\sim$38\%).

\paragraph{3. Spillover Bias Correction.} A theoretical framework to correct ATE estimates for residual interference based on graph structure assumptions. Since clustering cannot eliminate all spillover, we derive the bias as a function of the observable network topology.

\paragraph{4. Site-Wide Impact (SWI) Estimation.} Because of the spillover constraints, EgoCluster algorithms tend to create an unrepresentative sample of creators or egos. We develop a method using Conditional Average Treatment Effects (CATEs) with Coarsened Exact Matching (CEM) weighting to generalize the treatment effect back to the full creator population to obtain a SWI estimate.

~

These methods have been deployed in production at LinkedIn and have enabled measurement of two-sided effects that would otherwise go undetected due to low power.

\section{Related Work}
\label{sec:related}

\subsection{Network Interference and SUTVA Violations}

The problem of interference in networked systems has received substantial attention in the causal inference literature. Fisher's foundational work on randomized controlled experiments \cite{fisher1935design} implicitly assumed units were independent, but violations of SUTVA are ubiquitous in social networks and marketplaces. Rubin's potential outcomes framework \cite{rubin1974estimating} formalizes SUTVA as the requirement that a unit's outcome depends only on its own treatment assignment, not on others' assignments. Holland \cite{holland1986statistics} provided a comprehensive statistical framework for causal inference that highlighted the importance of these assumptions.

Violations manifest in two primary forms: social spillovers (where behavior influences connected peers) and competitive spillovers (where interventions affect market competitors). Manski \cite{manski1993identification} formalized the ``reflection problem'' in identifying social effects, which is closely related to interference in experiments. Empirical evidence of SUTVA violations has been documented across platforms \cite{bakshy2012design}. Facebook reported substantive interference in content ranking experiments through both direct network effects and competition effects. LinkedIn observed similar patterns, with creator feedback from viewers influencing creator behavior \cite{xu2015infrastructure}.

\subsection{Cluster-Based Randomization}

Cluster-based randomization is an established design-based approach to mitigating network interference \cite{box2005, tamhane2000statistical}. Instead of randomizing individuals, treatment is assigned at the cluster level, with all units in a cluster receiving the same treatment. The efficacy depends critically on cluster structure: tight within-cluster connections and sparse between-cluster edges minimize spillover.

Ugander et al. \cite{ugander2013graph} introduced graph cluster randomization and analyzed network exposure to multiple treatment universes, providing theoretical foundations for cluster-based experiments. Meta's network experimentation framework demonstrated the practical value of clustering \cite{karrer2021network}. Their approach uses community detection algorithms on the social graph to form clusters, then conducts cluster-level randomization. Subsequent theoretical work by Aronow and Samii \cite{aronow2017estimating} established that when interference is limited to within-cluster neighbors (partial interference), cluster randomization can enable unbiased ATE estimation.

The EgoCluster framework was first introduced by Saint-Jacques et al. \cite{Guillaume2019} for measuring network effects at LinkedIn. The V2 iteration \cite{wesu2024improving} improved upon this by introducing a one-degree label propagation algorithm that increased sample size five-fold while maintaining clustering quality.

\subsection{ATE Estimation Under Residual Spillover}

While cluster randomization reduces spillover, residual interference typically remains. When between-cluster spillover exists, naive ATE estimators are biased. Eckles et al. \cite{eckles2017design} provided design and analysis frameworks for experiments in networks that explicitly account for interference. Inverse Probability Weighting (IPW) and Horvitz-Thompson estimators weight observations by the inverse probability of exposure \cite{imbens2015causal}. Forastiere et al. \cite{forastiere2021identification} provided identification and estimation strategies for treatment effects in the presence of general interference patterns. Design-based randomization provides conservative variance estimates without assuming specific functional forms. Saveski et al. \cite{saveski2017detecting} introduced methods for detecting network effects by randomizing over randomized experiments. Our approach combines design-based thinking with structural assumptions about interference patterns specific to ego-cluster experiments.

\subsection{Two-Sided Marketplace Experiments}

Two-sided marketplaces introduce unique experimental challenges. Beyond network spillovers, competitive effects arise when interventions on supply-side units affect demand-side units through changed equilibrium \cite{weintraub2022experimentation}. Airbnb's work on cluster randomization \cite{holtz2024airbnb} demonstrated that cluster-based approaches can reduce interference bias by 20\% or more, though focused on competitive rather than network spillovers.
Large-scale experimentation infrastructure has been developed at major tech companies to handle these challenges \cite{tang2010overlapping, kohavi2009controlled, kohavi2017online}. Practical guidance on running such experiments at scale has been documented \cite{hohnhold2015focusing, deng2013improving, kohavi2014seven}. Chin \cite{chin2019controlling} provided regression-based methods for controlling for network effects in experiments that complement clustering approaches.

\subsection{Generalization from Non-Representative Samples}

When clustering creates sample attrition, the remaining sample may not represent the target population. Epasto et al. \cite{epasto2017ego} developed ego-splitting frameworks for handling non-overlapping to overlapping cluster scenarios. Standard approaches to generalization include propensity score stratification and inverse probability weighting, which we employ for site-wide effect estimation through Coarsened Exact Matching (CEM) \cite{imbens2015causal}.

\section{Problem Formulation and Background}
\label{sec:problem}

\subsection{Experimental Setting}

We work with a bipartite network $G = (E \cup A, W)$ where $E$ represents egos (one side of the marketplace, e.g., creators) and $A$ represents alters (the other side, e.g., viewers). Weighted edges $w_{ij}$ capture the strength of past interactions (e.g., number of feed impressions). Our goal is to assign egos to treatment ($T$) or control ($C$) groups such that: (1) spillover is minimized, (2) coverage is preserved, and (3) power is maximized.

Once egos are assigned, alters inherit their ego's treatment assignment based on the clustering algorithm. During the experiment, egos remain in control while alters receive treatment/control variants. The treatment effect is measured on egos.

\subsection{EgoCluster V2 Baseline}

EgoCluster V2 \cite{wesu2024improving} assigns alters to egos based on strongest connections. For each alter $j$, it computes aggregated interaction weights to treatment and control egos, then assigns the alter to the variant with higher aggregated weight:
\begin{equation}
\text{variant}(j) = \arg\max_{v \in \{T, C\}} \sum_{i \in v} w_{ij}.
\end{equation}
This approach generates 5$\times$ more clusters than V1 but leaves spillover high ($\approx$20--30\% loss rate at cluster creation). The loss rate is defined as the fraction of an ego's total alter weight that is assigned to the opposite variant:
\begin{equation}
\label{eq:loss_rate}
\text{LR}_i := \frac{\sum_{j: \text{variant}(j) \neq \text{variant}(i)} w_{ij}}{\sum_{j} w_{ij}}.
\end{equation}
The overall loss rate (OLR) aggregates individual ego loss rates weighted by their total connection weights.

\subsection{Key Challenge: Why Simple Filtering Failed}
\label{sec:filtering_failed}

A natural approach to reduce spillover is to discard egos with high loss rates. We investigated this in a preliminary study called ``EgoCluster V2.5,'' which filtered egos with loss rate exceeding a threshold (e.g., 20\%).
However, this approach fell short for two reasons:

\paragraph{1. Excessive Sample Loss.} When filtering egos with loss rates above 20\%, we removed approximately 85\% of all egos---from 2,209,225 egos down to only 338,577 egos. This drastic reduction in experimental sample size severely hurts statistical power.

\paragraph{2. Minimal Spillover Improvement.} Despite removing 85\% of egos, the Overall Loss Rate (OLR) only decreased from 30.03\% to 28.73\%---a mere 1.3 percentage point improvement. This is because alters of removed egos remain connected to the remaining egos, and many of these connections cross variant boundaries.

~

The underlying issue is that high-loss egos often have large networks with many shared alters. Simply removing these egos does not reassign their alters; it just excludes them from the experiment entirely. The remaining egos still experience spillover from these orphaned alters who interact with egos in both variants.

This observation motivated our iterative approach: instead of discarding high-loss egos and their alters entirely, we remove egos but \textit{re-assign} their alters to the remaining egos, gradually improving the overall clustering quality.

\section{EgoCluster V3: Iterative Clustering Algorithm}
\label{sec:v3_algorithm}

\subsection{Algorithm Overview}

EgoCluster V3 addresses the limitations of simple filtering through an iterative refinement process. The key insight is that by removing high-loss egos and \textit{re-assigning} their alters to remaining egos using the V2 clustering logic, we can progressively reduce spillover while maintaining alter coverage.

The algorithm proceeds through the following steps:

\paragraph{Step 1: Data Loading and Preprocessing.} Load the complete bipartite graph $G = (E \cup A, W)$ where edges represent ego-alter interactions weighted by engagement (e.g., impressions). Compute basic graph statistics for later evaluation.

\paragraph{Step 2: Network Size Filtering.} Remove extremely large egos exceeding the $p$-th percentile of network size (typically $p = 99.9$). These ``super-influencers'' have disproportionately many alters and generate excessive spillover that cannot be mitigated through clustering alone.

\paragraph{Step 3: Random Ego Assignment.} Randomly assign remaining egos to treatment or control variants with probability 0.5 each. Critically, this assignment is performed \textbf{once} at the beginning and persisted using checkpointing to prevent data loss during iterative processing.

\paragraph{Step 4: Initial Clustering (Iteration 0).} Apply the EgoCluster V2 clustering algorithm:
\begin{itemize}
    \item For each alter $j$, compute aggregated weights to treatment egos ($W_T(j)$) and control egos ($W_C(j)$)
    \item Assign $j$ to variant $\arg\max(W_T(j), W_C(j))$
    \item Within the assigned variant, assign $j$ to the ego with the highest individual weight
\end{itemize}
This per-alter argmax rule is optimal given a fixed set of active egos
(Theorem~\ref{thm:alter_optimality}), which is the property V3 exploits:
each iteration re-applies it to a shrinking, higher-quality ego set, so
the per-iteration clustering remains optimal conditional on the surviving egos.

\paragraph{Step 5: Loss Computation.} Compute two metrics for each ego:
\begin{itemize}
    \item \textbf{Loss Rate ($\text{LR}_i$)}: Fraction of ego $i$'s total weight connected to alters assigned to the opposite variant
    \item \textbf{Loss Weights}: Absolute weight of misaligned connections
\end{itemize}
Also compute the overall spillover summary across all egos.

\paragraph{Step 6: Iterative Refinement.} For each iteration $k = 1, 2, \ldots, K$:
\begin{enumerate}
    \item Load the ego-level loss metrics from the previous iteration
    \item Identify the ``worst'' egos to remove based on loss criteria (see Section~\ref{sec:stratified_filtering})
    \item Remove these egos from the active set
    \item Re-run the V2 clustering on the remaining egos and all alters
    \item Recompute loss rates and write results
\end{enumerate}

\paragraph{Step 7: Output Generation.} After all iterations complete, generate experiment assignment files for the desired iteration (typically the one achieving target spillover with acceptable coverage).

\subsection{Stratified Filtering with Features}
\label{sec:stratified_filtering}

A naive approach would rank all egos globally by loss rate and remove the top fraction. However, this can disproportionately remove egos from certain subpopulations, creating imbalanced experiments.

EgoCluster V3 implements stratified filtering based on ego features that capture platform tenure, engagement level, and activity patterns. Within each stratum defined by the cross-product of variant and feature buckets, we independently identify the top $s\%$ of egos by loss (where $s$ is the step size, typically 10\%). This ensures balanced filtering across ego subpopulations.

The filtering criterion can be either Loss Weights (removes egos contributing the most absolute spillover) or Loss Rate (removes egos with the highest proportional spillover). Empirically, using loss weights performs well as it prioritizes high-impact egos.

\subsection{Implementation Details}

EgoCluster V3 is implemented in Apache Spark using DataFrames for distributed computation. Key implementation aspects include:

\paragraph{Checkpointing.} The initial ego assignment DataFrame is checkpointed (not just cached) to prevent data loss under Spark's LRU eviction policy. This ensures consistent treatment assignment across all iterations.

\paragraph{Iteration Output.} Results are written after each iteration, partitioned by version (iter0, iter1, ...), enabling comparison across iterations without recomputation.

\subsection{Complexity Analysis}

Let $|E|$ be the number of egos, $|A|$ be the number of alters, and $|W|$ be the number of edges.

\begin{itemize}
    \item \textbf{Initial clustering}:
    \begin{itemize}
        \item $O(|W|)$ for aggregating weights
        \item $O(|A|\log|A|)$ for identifying the maximum per alter
    \end{itemize}
    \item \textbf{Loss computation}: $O(|W|)$ for joining cluster assignments with full graph
    \item \textbf{Per-iteration filtering}: $O(|E| \log |E|)$ for ranking and filtering
    \item \textbf{Total}: $O(K \cdot (|W| + |E| \log |E|))$ for $K$ iterations
\end{itemize}

For production networks with $\sim$1.6M egos and $\sim$100M edges, the full pipeline executes in approximately 2 hours on a Spark cluster.

\subsection{Evaluation Metrics}
\label{sec:eval_metrics}

We evaluate clustering quality using multiple metrics:

\paragraph{1. Loss Rate at T0 (Cluster Creation Time).} The overall loss rate computed immediately after clustering, using the same graph used for clustering. This measures how well the algorithm minimizes spillover given the input network structure.

\begin{equation}
\text{OLR}_{T0} = \frac{\sum_{i \in E} \text{loss\_weights}_i}{\sum_{i \in E} \text{total\_weights}_i}.
\end{equation}

\paragraph{2. Loss Rate at T14 (14 Days Post-Clustering).} The loss rate computed using a graph from 14 days after cluster creation. This measures \textit{stability}---how well the clustering holds up as user behavior evolves. We define two scenarios:
\begin{itemize}
    \item \textbf{Scenario 1}: Egos keep their assigned alters (cluster\_alters) from T0, and we measure their loss against the T14 graph
    \item \textbf{Scenario 2}: Each ego's cluster expands to include all alters in their network at T14 (network\_alters), and we measure loss for this expanded set
\end{itemize}

Scenario 2 is more realistic as it captures new connections formed during the experiment.

\paragraph{3. Stability Rate.} The fraction of ego-alter edges at T0 that still exist at T14:
\begin{equation}
\text{Stability} = \frac{|W_{T0} \cap W_{T14}|}{|W_{T0}|}.
\end{equation}
High stability (>80\%) validates the assumption that network structure remains relatively constant during the experiment.

\paragraph{4. Node Coverage.} The fraction of original egos (and alters) retained after iterative filtering:
\begin{equation}
\text{Coverage} = \frac{|E_{\text{final}}|}{|E_{\text{initial}}|}.
\end{equation}

\paragraph{5. Minimum Detectable Effect (MDE).} The smallest effect size detectable at 80\% power and 5\% significance level. MDE is inversely proportional to $\sqrt{n}$ and benefits from both increased sample size and reduced variance (from lower spillover).

\section{MultiEgoCluster: Two-Stage Clustering}
\label{sec:multiego}

While EgoCluster V3 achieves significant spillover reduction, a complementary innovation addresses a different source of spillover: egos that share many alters but are assigned to different variants.

\subsection{Motivation}

Consider two job posters in the same industry who receive applications from overlapping sets of job seekers. If these egos are assigned to different variants, their shared alters create spillover regardless of how the clustering algorithm assigns individual alters.

MultiEgoCluster addresses this by grouping similar egos into ``multi-egos'' treated as single units for randomization. It is most effective when the ego-ego graph (induced by shared alters) has a clear community structure---groups of egos sharing many alters internally but few across---as in job marketplaces, where posters in the same industry or region share seekers. When this structure is absent, inter-ego coupling is weak and indirect spillover paths are limited, so EgoCluster V3 alone suffices: MultiEgoCluster's marginal benefit is smallest precisely where it is needed least.

\subsection{Stage 1: Ego Grouping via Label Propagation}

We construct an ego-ego similarity graph and apply label propagation to form groups:

\paragraph{Step 1: Build Ego-Ego Graph.} For each pair of egos $(i, j)$, if they have shared alters, create an edge between them. The edge weight equals the number of shared alters:
\begin{equation}
\text{weight}(i, j) = |A_i \cap A_j|.
\end{equation}
where $A_i$ is the set of alters connected to ego $i$. Two egos are considered connected if they have at least one shared alter.

\paragraph{Step 2: Label Propagation.} Apply the label propagation algorithm \cite{raghavan2007near} to partition egos into communities. Each ego starts with a unique label, then iteratively adopts the most common label among its neighbors until convergence.

\paragraph{Step 3: Multi-Ego Formation.} Each connected component (community) forms a multi-ego. All egos in a multi-ego will receive the same treatment assignment.

\subsection{Stage 2: Iterative Clustering on Multi-Egos}

After forming multi-egos, we apply EgoCluster V3 with multi-egos as the randomization units:

\begin{enumerate}
    \item Randomly assign each multi-ego to treatment or control.
    \item For alter assignment, aggregate multi-ego weights $W_M(j) = \sum_{i \in M} w_{ij}$ for multi-ego $M$.
    \item Assign alters to the multi-ego (hence variant) with highest aggregated weight.
    \item Proceed with iterative refinement as in V3, but removing multi-egos rather than individual egos.
\end{enumerate}

\subsection{Theoretical Justification}

Grouping similar egos reduces spillover by ensuring that shared alters do not cross variant boundaries. If egos $i$ and $j$ share many alters and are in the same multi-ego, all their shared alters will be assigned to the same variant, eliminating this source of spillover.
The trade-off is reduced granularity: with fewer randomization units (multi-egos vs. individual egos), variance per unit may increase. However, the substantial spillover reduction typically outweighs this effect.

\subsection{Algorithm Framework}

\begin{figure}[h]
\centering
\includegraphics[width=0.95\linewidth]{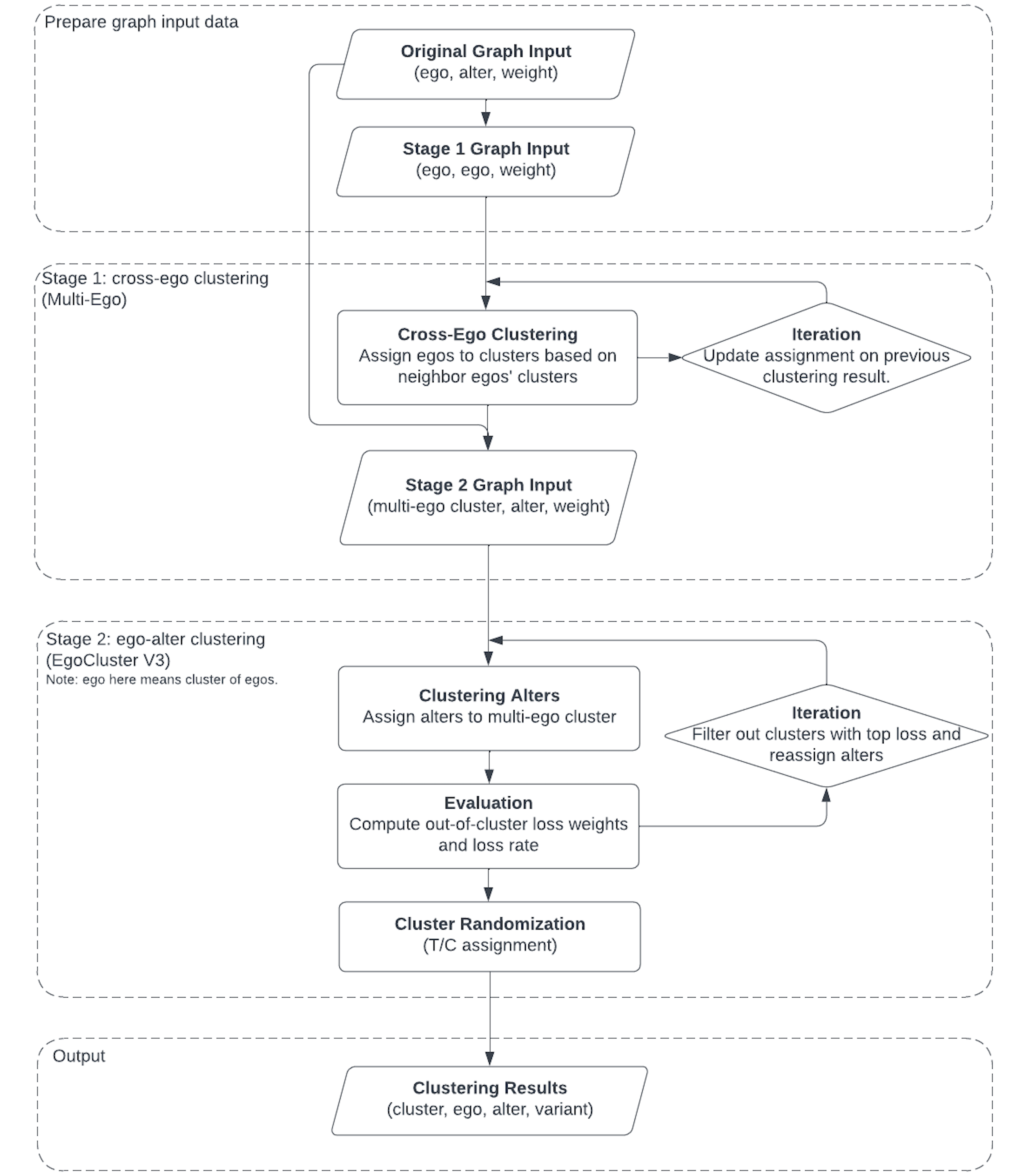}
\caption{\textbf{MultiEgoCluster Algorithm Overview.}
MultiEgoCluster combines two stages: 1) cross-ego clustering to form multi-ego clusters, 2) EgoCluster V3 for ego-alter clustering.}
\Description{Two-stage flowchart: Stage 1 groups similar egos using a shared-alter graph and label propagation, producing multi-ego clusters; Stage 2 applies EgoCluster V3 iterative clustering to those multi-egos and assigns alters.}
\label{fig:algo}
\end{figure}

\section{Bias Correction for Residual Spillover}
\label{sec:bias_correction}

Even with optimized clustering, residual spillover remains. In this section, we derive a theoretical framework for correcting the observed ATE and generalizing the impact to site-wide.

\subsection{Notation and Setup}

For each ego $i$, let:
\begin{itemize}
    \item $Y_i$: Observed outcome for ego $i$
    \item $n_i^T$: Number of treatment-assigned alters connected to ego $i$
    \item $n_i^C$: Number of control-assigned alters connected to ego $i$
    \item $n_i = n_i^T + n_i^C$: Total alters connected to ego $i$
    \item $\tau$: True treatment effect per alter
\end{itemize}

\subsection{Potential Outcomes Framework}

Under the potential outcomes framework with one-hop interference, ego $i$'s outcome depends on the number of treated vs. control alters it receives:
\begin{equation}
Y_i = Y_i(0) + \tau \cdot n_i^T.
\end{equation}
where $Y_i(0)$ is the baseline outcome with no treated alters, and $\tau$ captures the marginal effect of each treated alter.

For a treatment-assigned ego ($i \in T$), ideally all alters would be treated ($n_i^T = n_i$), yielding $Y_i = Y_i(0) + \tau \cdot n_i$. However, due to spillover, some alters may be assigned to control, so $n_i^T < n_i$.

\subsection{ATE Derivation}

The naive ATE estimator compares mean outcomes between treatment and control egos:
\begin{equation}
\hat{\mu}_Y^{\text{naive}} = \frac{1}{|T|} \sum_{i \in T} Y_i - \frac{1}{|C|} \sum_{i \in C} Y_i.
\end{equation}

Let $\alpha_T = \mathbb{E}[n_i^T / n_i | i \in T]$ be the average fraction of treatment alters for treatment egos (ideally 1, but less due to spillover), and $\alpha_C = \mathbb{E}[n_i^T / n_i | i \in C]$ be the same for control egos (ideally 0, but positive due to spillover).
Then, the expected naive ATE is:
\begin{equation}
\mathbb{E}[\hat{\mu}_Y^{\text{naive}}] = \tau \cdot \bar{n} \cdot (\alpha_T - \alpha_C).
\end{equation}
where $\bar{n}$ is the average number of alters per ego. Note that $(\alpha_T - \alpha_C) < 1$ due to spillover, so the naive estimator is attenuated.

\subsection{Bias Correction Formula}

The corrected ATE estimator accounts for the incomplete treatment separation:
\begin{equation}
\hat{\tau}^{\text{corrected}} = \frac{\hat{\mu}_Y^{\text{naive}}}{\bar{n} \cdot (\hat{\alpha}_T - \hat{\alpha}_C)}.
\end{equation}
where $\hat{\alpha}_T$ and $\hat{\alpha}_C$ are estimated from the clustering assignment:
\begin{align}
\hat{\alpha}_T &= 1 - \text{OLR}_T, \\
\hat{\alpha}_C &= \text{OLR}_C.
\end{align}

Here, $\text{OLR}_T$ is the overall loss rate for treatment egos (fraction of their alter weight in control), and $\text{OLR}_C$ is the loss rate for control egos (fraction in treatment).

\paragraph{Interpretation.} The correction inflates the observed effect by the factor $1/(\hat{\alpha}_T - \hat{\alpha}_C)$. When spillover is low, $\hat{\alpha}_T \approx 1$ and $\hat{\alpha}_C \approx 0$, so the correction factor approaches 1. With high spillover (e.g., $\hat{\alpha}_T = 0.8$, $\hat{\alpha}_C = 0.2$), the correction factor is $1/0.6 \approx 1.67$.

\subsection{Site-Wide Impact (SWI) Estimation}
\label{sec:swi}

Besides the unaccounted spillover effects, another challenge for reporting the site-wide impact is underrepresentativity introduced by the clustering procedure.
Because filtering removes certain types of egos (particularly those with high network connectivity), the remaining ego sample may not be representative of the full creator population. To estimate the site-wide impact, we need to reweight the experimental sample to match the population.

We employ Conditional Average Treatment Effects (CATEs) with Coarsened Exact Matching (CEM) weighting:

\paragraph{Step 1: Define Strata.} Coarsen continuous covariates into buckets based on ego characteristics such as platform tenure, engagement level, and topology-aware covariates e.g., network size.

\paragraph{Step 2: Compute CATEs.} For each stratum $s$ defined by the cross-product of bucket values, compute the conditional ATE:
\begin{equation}
\hat{\tau}_s = \bar{Y}_s^T - \bar{Y}_s^C.
\end{equation}
where $\bar{Y}_s^T$ and $\bar{Y}_s^C$ are the mean outcomes for treatment and control egos in stratum $s$.

\paragraph{Step 3: Compute Population Weights.} For each stratum $s$, compute the population weight:
\begin{equation}
w_s = \frac{N_s^{\text{pop}}}{\sum_{s'} N_{s'}^{\text{pop}}}.
\end{equation}
where $N_s^{\text{pop}}$ is the count of egos in stratum $s$ in the full population (including filtered egos).

\paragraph{Step 4: Compute Weighted SWI.} The site-wide impact is the weighted average of CATEs:
\begin{equation}
\hat{\tau}^{\text{SWI}} = \sum_s w_s \cdot \hat{\tau}_s.
\end{equation}

This approach ensures that the estimated effect reflects the composition of the full population, not just the experimental sample. Strata with many filtered egos (high $w_s$ but few observations) receive appropriate upweighting.

\paragraph{Trimming (Optional).} For strata with very few observations (insufficient for reliable CATE estimation), we can trim by excluding them from the weighted average and adjusting weights accordingly.

\paragraph{Discussion on uncertainty quantification.} Since the CEM is based on stratification features that capture heterogeneity, we can ablate one feature at a time and recompute the weights. Large movements indicate the dropped feature carries more heterogeneity and needs to be kept or refined.

The standard confidence interval of $\hat{\tau}^{\text{SWI}}$ can be estimated from the standard error of $\hat{\tau}_s$ for each stratum when $|T_s|$ and $|T_c|$ are sufficient. For long-tail stratification features, we may end up with strata that have few units, where we can estimate by bootstrapping with replacement. In practice, we derive heuristic lower and upper bounds by varying the affected creator population and whether downstream effects are included, which are used for cross-reference without having to rely on strong theoretical assumptions.

\section{Empirical Results}
\label{sec:results}

We evaluate EgoCluster V3 and MultiEgoCluster on LinkedIn's production networks.

\subsection{Experimental Setup}

\paragraph{Data.} We use the LinkedIn feed impression graph (impression\_45):
\begin{itemize}
    \item \textbf{Nodes}: Egos are content creators; alters are content viewers
    \item \textbf{Edges}: Weighted by the number of feed impressions sent by viewers and received by creators over a 45-day lookback window
    \item \textbf{Scale}: At V2/V3 iter0, approximately 2.2 million egos and 52.6 million unique alters in the treatment group
\end{itemize}

\paragraph{Evaluation Period.} Clustering is performed at time T0, with loss rates evaluated at T0 and T14 (14 days later).

\paragraph{Baselines.} We compare against EgoCluster V2 \cite{wesu2024improving} with various loss rate caps as baselines.

\subsection{EgoCluster V3 Performance}

Table~\ref{tab:v3_results} summarizes the performance of EgoCluster V3 compared to V2 across multiple iterations using the impression\_45 graph. This graph contains the impressions members received and sent during the past 45 days. Results are reported for the treatment group.

\begin{table}[t]
\centering
\caption{EgoCluster V3 vs V2 Performance (Impression\_45 Graph, Treatment Group)}
\label{tab:v3_results}
\begin{tabular}{@{}lcccc@{}}
\toprule
\textbf{Version} & \textbf{OLR} & \textbf{Egos} & \textbf{Alters} & \textbf{Ego \%} \\
\midrule
v2\_10pct\_cap & 22.00\% & 156,715 & 850,259 & 2.82\% \\
v2\_20pct\_cap & 28.73\% & 338,577 & 8,012,162 & 6.09\% \\
v2/v3\_iter0 & 30.03\% & 2,209,225 & 52,616,425 & 39.76\% \\
v3\_iter1 & 26.88\% & 2,139,676 & 45,011,734 & 38.51\% \\
v3\_iter2 & 24.38\% & 2,075,941 & 42,811,277 & 37.36\% \\
v3\_iter3 & 22.47\% & 2,014,391 & 40,930,798 & 36.26\% \\
v3\_iter4 & 19.88\% & 1,917,577 & 38,136,138 & 34.51\% \\
v3\_iter5 & 17.51\% & 1,827,816 & 35,694,494 & 32.90\% \\
\bottomrule
\end{tabular}
\end{table}

\begin{figure}[t]
\centering
\includegraphics[width=0.95\linewidth]{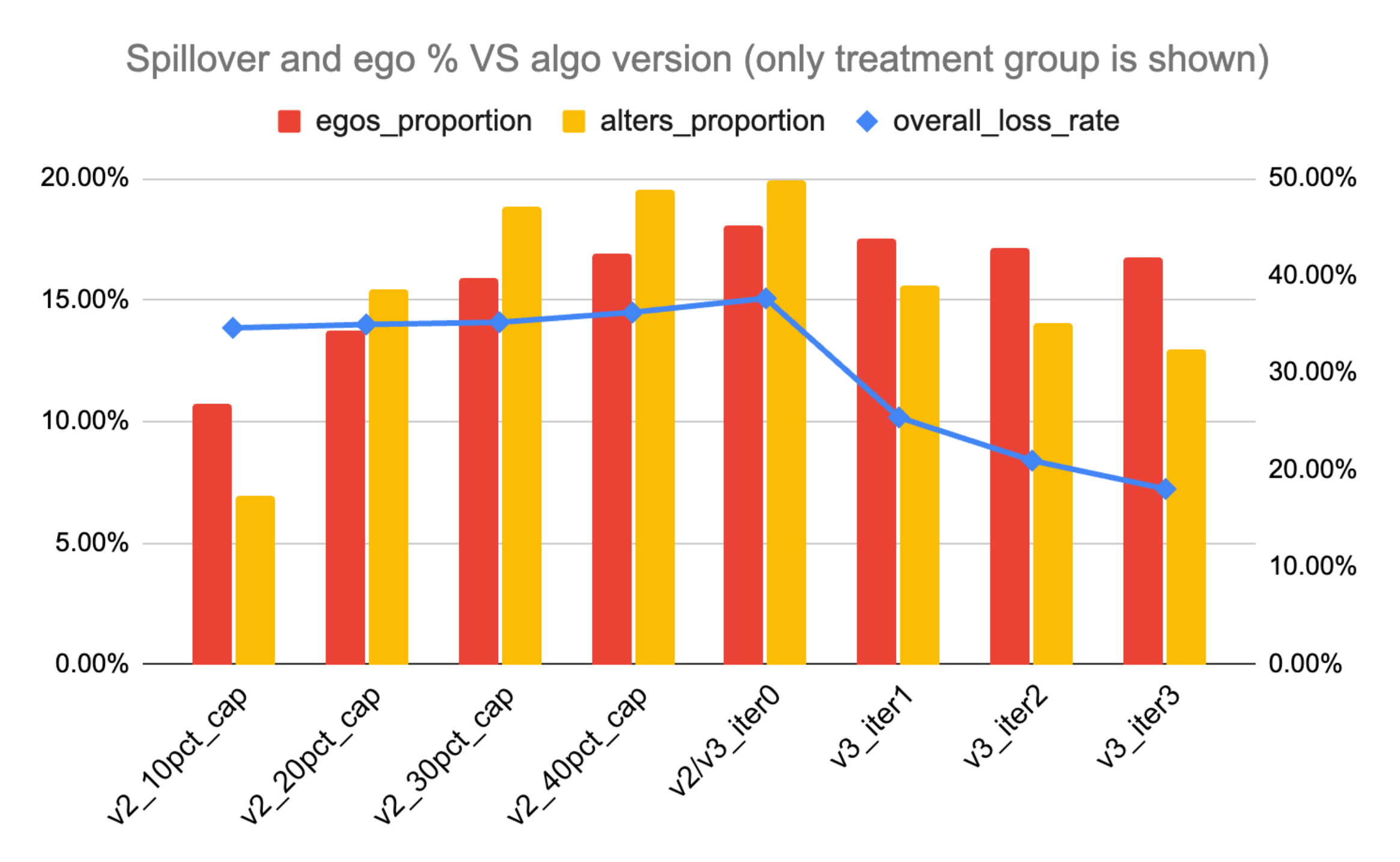}
\caption{\textbf{Loss Rate vs.\ Iteration.}
Loss rate convergence across EgoCluster V3 iterations. The middle bar shows the V2 baseline. Right bars demonstrate that loss rate and sample size both decrease with successive V3 iterations. Left bars show the effect of applying a hard cap on high-loss-rate egos, resulting in reduced sample size with minimal loss rate improvement.}
\Description{Bar chart comparing overall loss rate and sample size across V2 with hard caps, V2/V3 iter0, and V3 iterations 1 through 5. V3 iterations show monotonically decreasing loss rate with modest sample-size reduction.}
\label{fig:loss_convergence}
\end{figure}

\paragraph{Key Findings.}
\begin{itemize}
    \item \textbf{Spillover Reduction}: V3 achieves 42\% reduction in OLR (30.03\% $\rightarrow$ 17.51\%) by iteration 5, while V2 with 20\% cap only achieves 28.73\% OLR with 85\% fewer egos
    \item \textbf{Sample Size Preservation}: V3 iter5 retains 1.83M egos (82.7\% of iter0), compared to V2\_20pct\_cap which retains only 338k egos (15.3\%)
    \item \textbf{Coverage Trade-off}: Alter coverage decreases from 52.6M to 35.7M (68\% retention), still 4.5$\times$ more than V2\_20pct\_cap
    \item \textbf{Power Improvement}: The combination of lower variance (from reduced spillover) and high sample retention yields $\sim$2$\times$ MDE improvement
\end{itemize}

\subsection{MultiEgoCluster Performance}

Table~\ref{tab:multiego_results} compares MultiEgoCluster against V3 on the job marketplace graph.

\begin{table}[t]
\centering
\caption{MultiEgoCluster vs V3 Performance (Job Marketplace)}
\label{tab:multiego_results}
\setlength{\tabcolsep}{3pt}
\begin{tabular}{@{}lccc@{}}
\toprule
\textbf{Metric} & \textbf{V3 Baseline} & \textbf{MultiEgoCluster} & \textbf{Improvement} \\
\midrule
T0 Loss Rate & 15.25\% & 6.65\% & 56.4\% reduction \\
T14 Loss Rate & 18.8\% & 8.45\% & 55.1\% reduction \\
Ego Coverage & 43.5\% & 47.3\% & +8.7\% \\
Alter Coverage & 45.1\% & 62.4\% & +38.4\% \\
\bottomrule
\end{tabular}
\end{table}

\begin{figure}[t]
\centering
\includegraphics[width=0.95\linewidth]{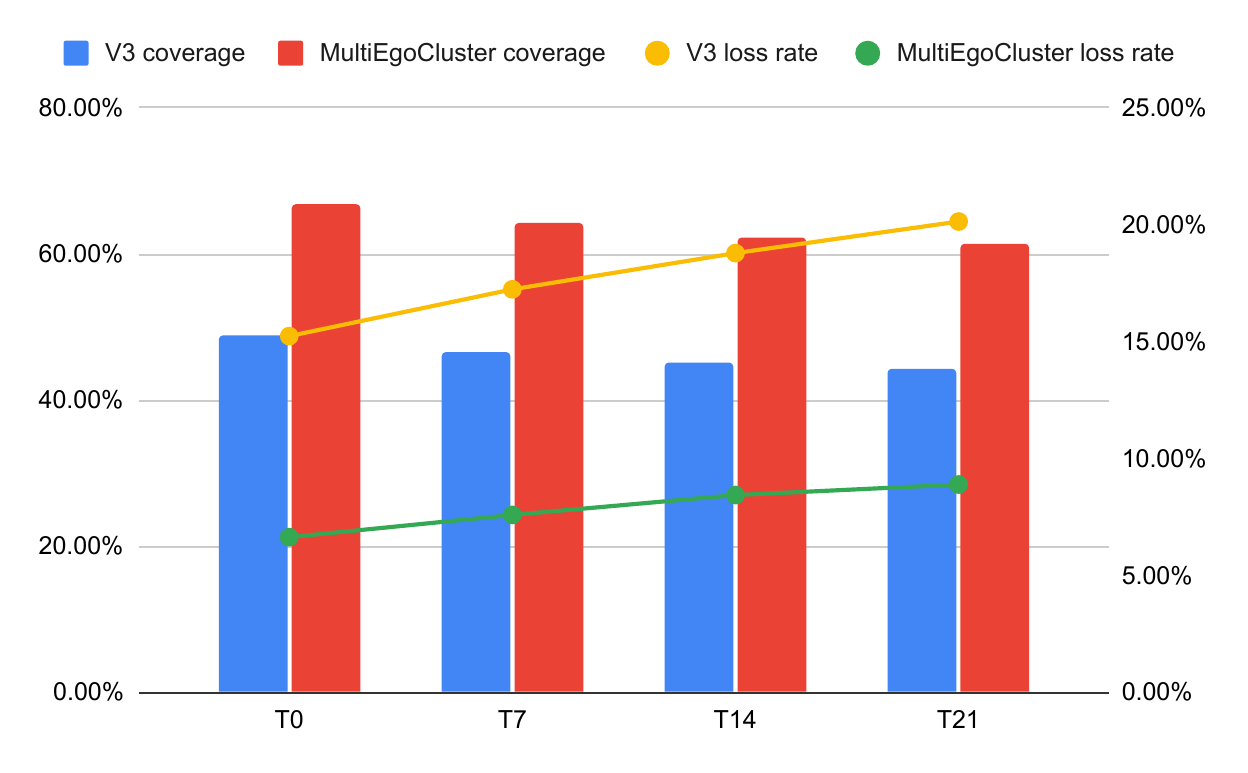}
\caption{\textbf{Spillover and Coverage Comparison of MultiEgoCluster vs.\ EgoCluster V3.}
MultiEgoCluster achieves lower spillover and higher alter coverage across all time periods (T0--T21). It reduces loss rate by a substantial margin and increases alter coverage by over 35 percentage points, enabling larger and more stable samples for experimentation.}
\Description{Two grouped bar charts. The left chart shows loss rate over T0, T7, T14, and T21 for V3 versus MultiEgoCluster, with MultiEgoCluster consistently lower. The right chart shows alter coverage over the same time points, with MultiEgoCluster consistently higher.}
\label{fig:MultiEgoCluster_performance}
\end{figure}

\paragraph{Key Findings.}
\begin{itemize}
    \item \textbf{Additional Spillover Reduction}: MultiEgoCluster reduces spillover by an additional 56\% beyond V3
    \item \textbf{Coverage Increase}: Alter coverage increases by 38\%, as grouping egos prevents the cascade of alter losses when individual egos are filtered
    \item \textbf{Net Power Gain}: The combination of 56\% lower variance and 38\% more alters yields substantial net power improvement
\end{itemize}

\section{Production Impact}
\label{sec:impact}

Both EgoCluster V3 and MultiEgoCluster have been deployed in production at LinkedIn, enabling previously infeasible two-sided experiments.

\subsection{EgoCluster V3: Creator Feedback Experiment}

We applied EgoCluster V3 to measure the creator-side impact of a feed ranking change designed to increase viewers' feedback on creators.

\paragraph{Experiment Design.}
\begin{itemize}
    \item \textbf{Treatment}: New ranking algorithm emphasizing creator feedback to boost creator's engagement
    \item \textbf{Control}: Existing ranking algorithm
    \item \textbf{Duration}: 14 days
    \item \textbf{Clustering}: V3 iter5, achieving $\sim$11\% T0 loss rate
\end{itemize}

\paragraph{Results.}
\begin{itemize}
    \item \textbf{Primary Metrics}: Creator retention and activity metrics showed statistically significant positive impacts
    \item \textbf{Secondary Metrics}: Creator downstream metrics, e.g., their viewers' engagement increase due to notifications.
    \item \textbf{SWI Estimation}: Site-wide impact estimated after CEM reweighting showed consistent positive effects
\end{itemize}

\paragraph{Counterfactual.} With V2's higher loss rate, the predicted MDE would have been insufficient to detect these effects with 80\% power. V3's 2$\times$ power improvement was critical.

\subsection{MultiEgoCluster: Job Marketplace Experiment}

We applied MultiEgoCluster to measure job poster outcomes from a job seeker ranking improvement.

\paragraph{Experiment Design.}
\begin{itemize}
    \item \textbf{Treatment}: Improved job-seeker ranking by relevance
    \item \textbf{Control}: Baseline ranking
    \item \textbf{Duration}: 1 week
    \item \textbf{Clustering}: MultiEgoCluster with $\sim$16\% T0 loss rate
\end{itemize}

This $\sim$16\% production loss rate differs from the 6.65\% offline result in Table~\ref{tab:multiego_results} due to different graph constructions: the production graph applies use-case-specific eligibility filters and a different time window than the offline framework of Section~\ref{sec:results}, though both represent substantial improvements over the $\sim$30\%+ baseline without our methods.

\paragraph{Results.}
\begin{itemize}
    \item \textbf{Primary Metrics}: Job posting performance showed statistically significant positive impacts
    \item \textbf{Secondary Metrics}: Job seeker engagement and quality metrics improved
    \item \textbf{Business Impact}: Enabled launch of ranking improvement with measurable ROI
\end{itemize}

The 38\% increase in sample size and 56\% spillover reduction enabled detection of effects that would have been borderline with V3 alone.

\section{Comparison with Alternative Approaches}
\label{sec:alternatives}

\subsection{Individual Randomization (Full Population)}

Randomizing alters individually maximizes sample size but creates severe spillover. Each ego receives a roughly 50/50 mix of treatment and control alters, eliminating any treatment/control separation at the ego level. For example, in the creator-viewer marketplace, each creator would receive mixed feedback from both treatment and control viewers; similarly, in the job marketplace, each job poster would receive applications from both treatment and control job seekers. Consequently, we would expect to observe 0\% treatment effect on ego metrics in such an experiment, regardless of the true causal impact.

This is not merely a power issue---it is a fundamental identification problem. Without separating egos into treatment-pure and control-pure groups, there is no contrast to estimate.

\subsection{Switchback Designs}

Switchback experiments alternate treatment over time periods rather than across users. Potential advantages include:
\begin{itemize}
    \item No clustering needed; full population included
    \item Temporal separation may reduce some spillover
\end{itemize}

However, for two-sided marketplace experiments, switchback designs face challenges:
\begin{itemize}
    \item \textbf{Carryover Effects}: Ego behavior (e.g., creators posting content, job posters managing listings) is highly autocorrelated. Treatment effects from one period spill into subsequent control periods as egos adapt their behavior.
    \item \textbf{Measurement Window}: Ego outcomes manifest over days, making short switchback periods impractical.
    \item \textbf{Variance}: Temporal correlation inflates variance, resulting in worse MDE than cluster randomization.
\end{itemize}

We conclude that switchback designs are not well-suited for measuring ego-side effects in two-sided marketplace settings.

\subsection{Horvitz-Thompson Estimation}

An alternative is individual randomization with inverse probability weighting:
\begin{equation}
\hat{\tau}^{HT} = \frac{1}{n} \sum_i \frac{Y_i \cdot Z_i}{\pi_i} - \frac{1}{n} \sum_i \frac{Y_i \cdot (1-Z_i)}{1-\pi_i}.
\end{equation}
where $\pi_i$ is the probability that unit $i$ is exposed to treatment (accounting for network exposure).

While theoretically unbiased under partial interference, HT estimators suffer from:
\begin{itemize}
    \item \textbf{High Variance}: In networks with high-degree nodes (max degree $\sim$100k in our case), exposure probabilities become extreme, inflating variance
    \item \textbf{Empirical Performance}: HT confidence intervals are substantially wider than cluster-based estimates in our experiments
\end{itemize}

\section{Discussion and Limitations}
\label{sec:discussion}

\subsection{Assumptions}

Our methods rely on several assumptions:

\paragraph{One-Hop Interference.} We assume spillover effects propagate only one hop (from treated alters to egos). In highly interconnected networks, multi-hop effects may exist but are not modeled.

\paragraph{Additive Spillover.} The bias correction assumes spillover effects are additive in the number of treated alters. Saturation effects (where additional treated alters have diminishing impact) would require modified corrections.

\paragraph{Network Stability.} We assume the network structure remains stable during the experiment. Our 85\% stability rate validates this for 14-day experiments, but longer experiments may require dynamic reclustering.

\paragraph{Feature-Based Stratification.} SWI estimation assumes that stratification features capture the relevant heterogeneity between experimental and population samples.

\subsection{Generalization}

The EgoCluster framework applies to any bipartite marketplace with observed interaction networks:
\begin{itemize}
    \item Feed ecosystems (creators $\times$ viewers)
    \item Job marketplaces (posters $\times$ seekers)
    \item Advertising (advertisers $\times$ audiences)
    \item E-commerce (sellers $\times$ buyers)
\end{itemize}

Key requirements: (1) bipartite network structure, (2) spillover flowing primarily in one direction, (3) sufficient sample size ($>$50k egos) for stable stratified filtering.

\section{Conclusion}
\label{sec:conclusion}

Causal inference in two-sided networked marketplaces faces the challenge of interference between participants. Cluster randomization experiments control the interference and enables cross-sided impact measurement. However, clustering methods have a fundamental trade-off: reducing spillover requires removing high-interference nodes, but naively filtering these nodes eliminates their connections and reduces sample size without much improvement in spillover. This paper introduces techniques to address this challenge and pushes that trade-off frontier out:

\begin{enumerate}
    \item \textbf{EgoCluster V3}: Iterative clustering that reduces spillover by 3$\times$ (from $\sim$28\% to $\sim$9\%) while maintaining 90\%+ alter coverage, enabling 2$\times$ MDE improvement. This achieves a >80\% reduction in spillover compared to individual-level Bernoulli randomization ($\sim$50\% spillover).

    \item \textbf{MultiEgoCluster}: Two-stage clustering achieving $\sim$6\% (near-negligible) spillover with 38\% increased sample size through ego grouping.

    \item \textbf{Bias Correction and SWI Estimation}: Theoretical framework for correcting residual spillover bias and generalizing from non-representative samples using CEM-weighted CATEs.
\end{enumerate}

We deployed these methods in production at LinkedIn, enabling high-impact experiments that measure two-sided effects which were previously undetectable due to high spillover and low power.

\paragraph{Future Work.}
\begin{itemize}
    \item Building multi-hop interference models to handle highly connected networks.
    \item Designing dynamic reclustering process for long-duration experiments.
    \item Measuring heterogeneous spillover effects with covariate-dependent $\lambda$.
\end{itemize}

\begin{acks}
We thank Di Mo, Min Gong, and James Sorenson for their guidance and support for this project. We also thank our leadership team, especially Jia Ding, Kuo-Ning Huang, Tim Jurka, and Wenjing Zhang for their continued support of this work. We appreciate LinkedIn Flagship Experience and Job Marketplace teams for their close collaboration, implementation support, and valuable feedback, as well as discussions with the experimentation and causal inference community that helped shape this work.
\end{acks}

\bibliographystyle{ACM-Reference-Format}
\balance  
\bibliography{egocluster_v3_updated}

\appendix

\section{Algorithm Pseudocode}
\label{app:algorithm}
See Algorithm \ref{alg:v3} for the pseudocode for EgoCluster V3.

\begin{algorithm}[h]
\caption{EgoCluster V3 Iterative Refinement}
\label{alg:v3}
\begin{algorithmic}[1]
\REQUIRE Bipartite graph $G = (E, A, W)$, max iterations $K$, step size $s$
\ENSURE Clustering assignment $C$, loss summary
\STATE $E' \leftarrow$ FilterByNetworkSize$(E)$
\STATE $\text{egoAssignment} \leftarrow$ RandomAssign$(E', \{T, C\})$
\STATE Checkpoint$(\text{egoAssignment})$
\STATE $C_0 \leftarrow$ V2Cluster$(G, \text{egoAssignment})$
\STATE WriteLossSummary$(C_0, \text{iter}=0)$
\FOR{$k = 1$ to $K$}
    \STATE $\text{egoWeights} \leftarrow$ LoadPreviousIteration$(k-1)$
    \STATE $\text{egoWeights} \leftarrow$ JoinFeatures$(\text{egoWeights})$
    \FOR{each stratum $s$ in (variant, features)}
        \STATE $\text{countToDrop} \leftarrow s \cdot |\text{egos in stratum}|$
        \STATE $\text{worstEgos} \leftarrow$ TopByLoss$(\text{stratum}, \text{countToDrop})$
        \STATE $E' \leftarrow E' \setminus \text{worstEgos}$
    \ENDFOR
    \STATE $\text{egoAssignment}' \leftarrow$ Filter$(\text{egoAssignment}, E')$
    \STATE $C_k \leftarrow$ V2Cluster$(G, \text{egoAssignment}')$
    \STATE WriteLossSummary$(C_k, \text{iter}=k)$
\ENDFOR
\RETURN $C_K$
\end{algorithmic}
\end{algorithm}

\section{Power Analysis}
\label{app:figures}

In this section, we compare the minimal detectable effect (MDE) of a true north metric between EgoCluster V2 and V3. In V2, we disregard egos whose loss rate exceeds a threshold (e.g., 80\%, 50\%, 30\%) and compute the MDE on the remaining population. For EgoCluster V3, we compute the MDE for each iteration (up to 5).
The results show that EgoCluster V3 achieves substantially lower MDE compared to V2 at the same sample size and power level. Notably, even at the final iteration where spillover control is strongest, V3 maintains low MDE due to its significantly larger sample size, enabling both stringent spillover control and strong statistical power.

\begin{figure}[H]
\centering
\includegraphics[width=0.95\linewidth]{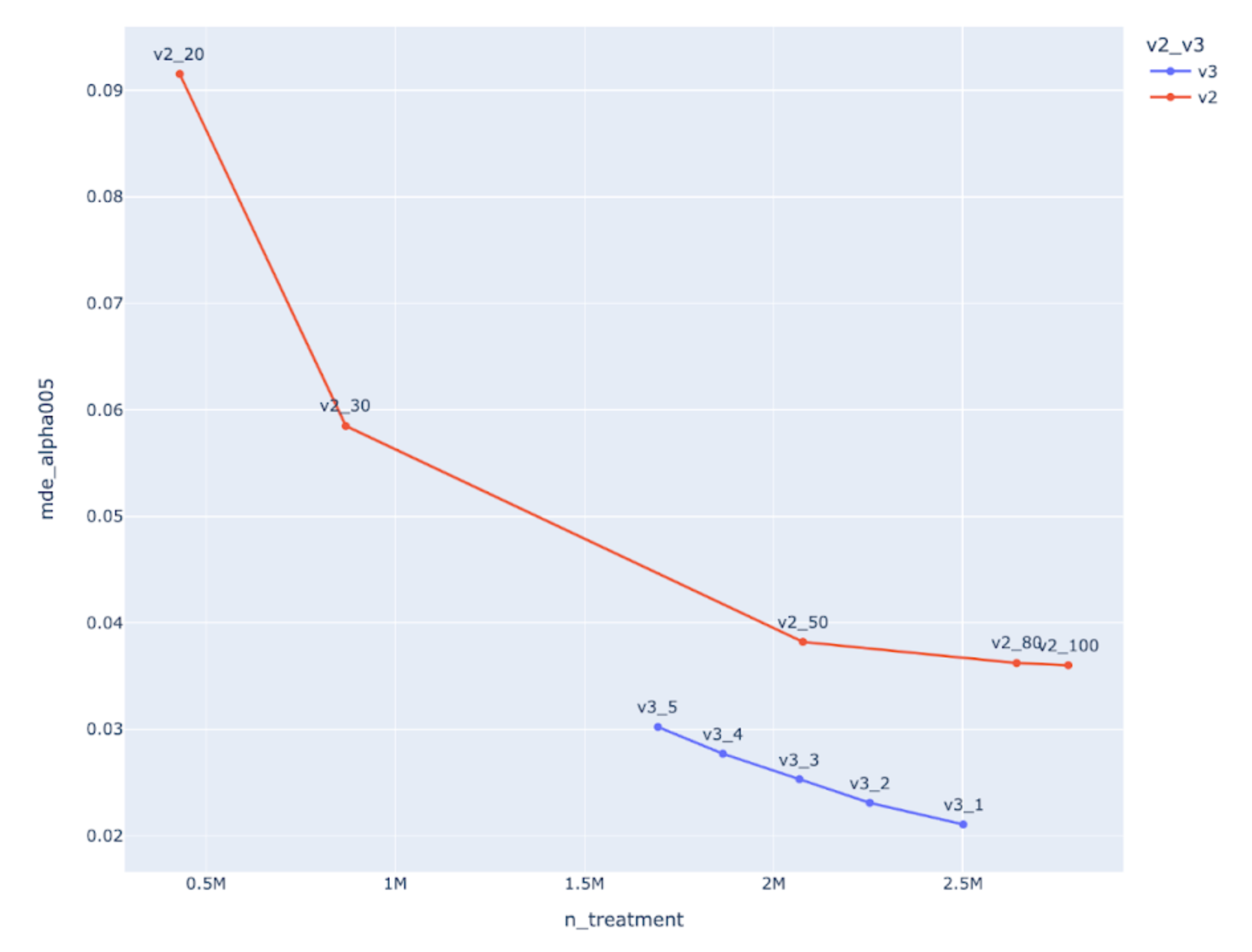}
\caption{\textbf{Minimal Detectable Effect (MDE) comparison.}
V2 applies hard caps on egos with loss rate exceeding X\%; V3 shows MDE across iterations.}
\Description{Line/bar chart comparing minimal detectable effect across EgoCluster V2 with various loss-rate caps versus EgoCluster V3 iterations 1 through 5. V3 achieves uniformly lower MDE.}
\label{fig:mde_power}
\end{figure}

\section{Detailed Proofs}
\label{app:proofs}

\begin{theorem}[Per-Iteration Optimality of Alter Assignment]
\label{thm:alter_optimality}
For any fixed set of active egos $E$ partitioned into treatment and control,
the alter-assignment rule used by EgoCluster V3
within each iteration on the surviving ego set minimizes the overall loss
rate. Consequently, each V3 iteration produces an assignment that is optimal
conditional on the egos remaining after filtering.
\end{theorem}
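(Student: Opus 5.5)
The plan is to write the overall loss rate as a sum of per-alter terms. Once the ego partition is fixed, each alter contributes independently, so minimizing the total reduces to choosing each alter's variant separately. Fix the active ego set $E$ with its partition $E = E_T \cup E_C$. Every alter $j$ receives a variant $v(j)\in\{T,C\}$; the within-variant choice of a specific ego does not affect loss. From the definition of the loss weights and of $\text{OLR}_{T0}$, the numerator is
\begin{equation*}
\sum_{i\in E}\ \sum_{j:\, v(j)\neq v(i)} w_{ij} \;=\; \sum_{j\in A} \Bigl( \mathbf{1}[v(j)=T]\, W_C(j) + \mathbf{1}[v(j)=C]\, W_T(j) \Bigr),
\end{equation*}
where $W_T(j)=\sum_{i\in E_T} w_{ij}$ and $W_C(j)=\sum_{i\in E_C} w_{ij}$. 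This identity comes from exchanging the order of summation over edges.

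The first key step is the denominator. I will show that $\sum_{i\in E}\text{total\_weights}_i=\sum_{i\in E}\sum_j w_{ij}$ depends only on $E$ and the graph, and not on the alter assignment $v$. For that reason I will take total weight to mean the ego's full network weight to active-graph alters, as in Eq.~\eqref{eq:loss_rate}. Minimizing OLR is then the same as minimizing the numerator.

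The second step is separability. The numerator is a sum over $j$ of terms that each depend only on $v(j)$, so the joint minimum over all assignments $v\in\{T,C\}^A$ is obtained by minimizing each term on its own. For alter $j$ the two options cost $W_C(j)$ (if $j$ goes to $T$) and $W_T(j)$ (if $j$ goes to $C$). The minimum is $\min(W_T(j),W_C(j))$, and it is attained exactly by $v(j)=\arg\max(W_T(j),W_C(j))$, which is the V2 rule in Step~4. Ties are handled by noting that either choice is optimal. The final within-variant step, sending $j$ to its highest-weight ego, leaves the loss unchanged and so preserves optimality.

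The third step is the iteration corollary. At iteration $k$, the filtering stage fixes the surviving set $E'_k$ and its variants, which are inherited from the checkpointed assignment. Applying the argument above with $E=E'_k$ shows that $C_k$ minimizes OLR among all alter assignments for $E'_k$. The same holds for MultiEgoCluster if $W_M(j)$ is aggregated by variant.

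The main obstacle is definitional rather than technical. I need to make sure the denominator truly does not depend on the assignment, in particular that alters linked only to removed egos, or not assigned at all, do not shift the total weight. If they did, OLR would become a ratio and separability would fail. I will first try to pin the definition down so that the denominator is fixed given $E$. If that cannot be done, I will state the theorem for total loss weight, which is equivalent once $E$ is fixed.
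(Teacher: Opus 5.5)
Your proposal is correct and follows the paper's proof: you write the total loss as $\sum_j \text{Loss}(j)$ by exchanging the order of summation, then minimize each alter's term independently via $\arg\max(W_T(j),W_C(j))$. Your extra steps only tighten the same argument. These are the check that the OLR denominator $\sum_{i\in E}\sum_j w_{ij}$ does not depend on the assignment (the paper leaves this implicit when it passes from total loss to loss rate), the treatment of ties, and the observation that the within-variant ego choice leaves the loss unchanged.
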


\begin{proof}
Fix any iteration's surviving ego set $E$ partitioned into $T$ and $C$.
Consider any alter $j$ connected to egos in both variants, and let
$W_T(j) = \sum_{i \in T} w_{ij}$ and $W_C(j) = \sum_{i \in C} w_{ij}$.

The alter-assignment rule assigns $j$ to variant\\
$v^*(j) = \arg\max(W_T(j), W_C(j))$. The loss contribution from $j$ is the
weight to egos in the opposite variant:
\begin{equation}
\text{Loss}(j) = \begin{cases}
W_C(j) & \text{if } v^*(j) = T, \\
W_T(j) & \text{if } v^*(j) = C.
\end{cases}
\end{equation}

By construction, $v^*(j)$ minimizes $\text{Loss}(j)$ for each alter
independently. Because the overall loss is the sum $\sum_j \text{Loss}(j)$ of
per-alter losses, and each term is minimized by an independent choice, the
alter-assignment rule attains the minimum overall loss rate for the given
ego set $E$. Since EgoCluster V3 applies this rule on the surviving ego set
in every iteration, each iteration's clustering is optimal conditional on
the egos that remain after filtering.
\end{proof}

\end{document}